\newtheorem{construction}{Construction}
\newtheorem{lemma}{Lemma}
\newtheorem{corollary}{Corollary}
\newtheorem{definition}{Definition}
\newtheorem{remark}{Remark}
\newtheorem{example}{Example}
\newcommand{\beqno}{ \begin{equation*} }
\newcommand{\eeqno}{ \end{equation*} }
\newcommand{\beq}{ \begin{equation} }
\newcommand{\eeq}{ \end{equation} }
\newcommand{\calM}{\mathcal{M}}
\newcommand{\calC}{\mathcal{C}}
\begin{document}
\title{Replication based storage systems with local repair}
\author{\IEEEauthorblockN{Oktay Olmez\IEEEauthorrefmark{1}\IEEEauthorrefmark{2} and Aditya Ramamoorthy\IEEEauthorrefmark{2}}
\IEEEauthorblockA{\IEEEauthorrefmark{1}Department of Mathematics,
Iowa State University,
Ames, Iowa 50011.\\}
\IEEEauthorblockA{\IEEEauthorrefmark{2}Department of Electrical and Computer Engineering,
Iowa State University,
Ames, Iowa 50011.\\
\{oolmez, adityar\}@iastate.edu}
\thanks{This work was supported in part by NSF grant CCF-1018148.}
}
\maketitle
\begin{abstract}
We consider the design of regenerating codes for distributed storage systems that enjoy the property of local, exact and uncoded repair, i.e., (a) upon failure, a node can be regenerated by simply downloading packets from the surviving nodes and (b) the number of surviving nodes contacted is strictly smaller than the number of nodes that need to be contacted for reconstructing the stored file.

Our codes consist of an outer MDS code and an inner fractional repetition code that specifies the placement of the encoded symbols on the storage nodes. For our class of codes, we identify the tradeoff between the local repair property and the minimum distance. We present codes based on graphs of high girth, affine resolvable designs and projective planes that meet the minimum distance bound for specific choices of file sizes.
\end{abstract}
\section{Introduction}
\label{sec:intro}

Large scale data storage systems that are employed in social networks, video streaming websites and cloud storage are becoming increasingly popular. In these systems the integrity of the stored data and the speed of the data access needs to be maintained even in the presence of unreliable storage nodes. This issue is typically handled by introducing redundancy in the storage system, through the usage of replication and/or erasure coding.
However, the large scale, distributed nature of the systems under consideration introduces another issue. Namely, if a given storage node fails, it need to be regenerated so that the new system continues to have the properties of the original system. It is of course desirable to perform this regeneration in a distributed manner and optimize performance metrics associated with the regeneration process.

In recent years, regenerating codes have been the subject of much investigation (see \cite{dimsurvey} and its references).
The principal idea of regenerating codes is to use subpacketization \cite{Dim}. In particular, one treats a given physical block as consisting of multiple packets (unlike the MDS code that stores exactly one packet in each node). Coding is now performed across the packets such that the file can be recovered by contacting a certain minimum number of nodes. In addition, one can regenerate a failed node by downloading appropriately coded data from the surviving nodes. 

A distributed storage system (henceforth abbreviated to DSS) consists of $n$ storage nodes, each of which stores $\alpha$ packets. In our discussion, we will treat these packets as elements from a finite field. Thus, we will equivalently say that each storage node contains $\alpha$ symbols (we use symbols and packets interchangeably throughout our discussion). A given user, also referred to as the data collector needs to have the ability to reconstruct the stored file by contacting any $k$ nodes; this is referred to as the maximum distance separability (MDS) property of the system. Suppose that a given node fails. The DSS needs to be repaired by introducing a new node. This node should be able to contact any $d \geq k$ surviving nodes and download $\beta$ packets from each of them for a total repair bandwidth of $\gamma = d \beta$ packets. Thus, the system has a repair degree of $d$, normalized repair bandwidth $\beta$ and total repair bandwidth $\gamma$. The new DSS should continue to have the MDS property. A large body of prior work (see for instance \cite{RSKR09,SuhR11,Dim} for a representative set) has considered constructions for functional and exact repair at both the minimum bandwidth regenerating (MBR) point where the repair bandwidth $\gamma$ is minimum and the minimum storage regenerating (MSR) point where the storage capacity $\alpha$ is minimum.


However, repair bandwidth is not the only metric for evaluating the repair process. It has been observed that the number of nodes that the new node needs to contact for the purposes of repair is also an important metric that needs to be considered. For either functional or exact repair (discussed above) the repair degree $d$ needs to be at least $k$. The notion of local repair introduced by \cite{gopalan12, papD12, oggierD11}, considers the design of DSS where the repair degree is strictly smaller than $k$. This is reasonable since contacting $k$ nodes allows the new node to reconstruct the entire file, assuming that the amount of data downloaded does not matter.

Much of the existing work in this broad area considers {\it coded} repair where the surviving nodes and the new node need to compute linear combinations for regeneration. It is well recognized that the read/write bandwidth of machines is comparable to the network bandwidth \cite{wiki}. 
Thus, this process induces additional undesirable delays \cite{jiekak_et_al12_preprint} in the repair process. The process can also be potentially memory intensive since the packets comprising the file are often very large (of the order of GB). 

In this work we consider the design of DSS that can be repaired in a local manner by simply downloading packets from the surviving nodes, i.e., DSS that have the exact and uncoded repair property.
\subsection{Background and Related Work}
The problem of local repair was first considered in references \cite{gopalan12,papD12,oggierD11}. Tradeoffs between locality and minimum distance, and corresponding code constructions were proposed in \cite{gopalan12} for the case of scalar codes ($\alpha =1$) and extended to the case of vector codes ($\alpha > 1$) in \cite{kamath12,rawat12,papD12}.
The design of DSS that have exact and uncoded repair and operate at the MBR point was first considered in the work of \cite{el10} and further constructions appeared in \cite{kooG11,olmezR12}. Codes for these systems are a concatenation of an outer MDS code and an inner fractional repetition code that specifies the placement of the encoded symbols on the storage nodes. In this work we consider the design of such codes that allow for local repair in the presence of one of more failures.

The work of \cite{kamath12, rawat12}, considers vector codes that allow local recovery in the presence of more than one failure. In their setting each storage node participates in a local code that has minimum distance greater than two. They present minimum distance bounds and corresponding code constructions that meet these bounds. The work of \cite{kamath12} on the design of MBR repair-by-transfer codes is most closely related to our work. However, as we shall see our constructions in Section \ref{sec:code_cons_bounds} are quite different from those that appear in \cite{kamath12} and allow for a larger range of code parameters. Moreover, as we focus on fractional repetition codes, our minimum distance bound is much tighter than the general case treated in \cite{kamath12, rawat12}.


\section{Problem Formulation}
\label{sec:problem_form}

The DSS is specified by parameters $(n,k,r)$ where $n$ - number of storage nodes, $k$ - number of nodes to be contacted for recovering the entire file and $r < k$ is the local repair degree, i.e., the number of nodes that an incoming node connects to for regenerating a failed node. The repair is performed by simply downloading packets from the existing nodes and is symmetric, i.e, the same number of packets are downloaded from each surviving node that is contacted.
It follows that we download $\beta_{loc} = \alpha/r$ packets from the surviving nodes. 

The proposed architecture for the system consists of an outer MDS code followed by an inner fractional repetition code. Specifically,
let the file that needs to be stored consist of $\calM$ symbols $x_1, \dots, x_{\calM}$. Suppose that these symbols are encoded using a $(\theta, \calM)$-MDS code to obtain encoded symbols $y_1, \dots, y_{\theta}$. The symbols $y_1, \dots, y_{\theta}$ are placed on the $n$ storage nodes, such that each symbol appears exactly $\rho$ times in the DSS. An example is illustrated in Fig. \ref{DSS-(15,4,4,2)}.



\begin{figure} [t]
\centering
\includegraphics[scale=0.25]{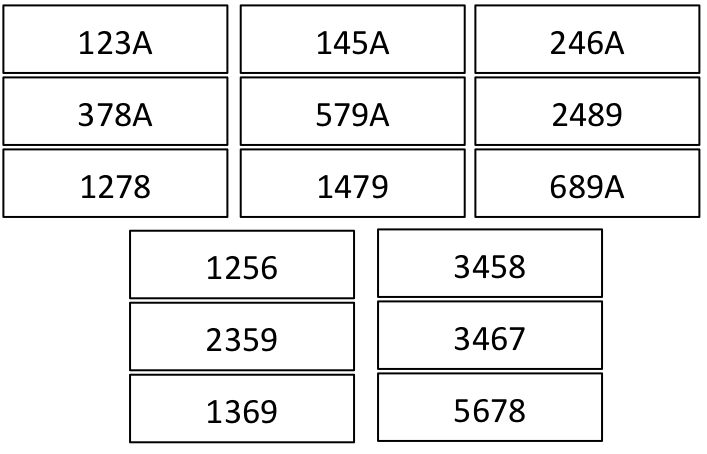}
\caption{The figure shows a DSS where $n=15,k=4,r=2,\theta=10,\alpha =4,\rho =6$. A node can be repaired locally by contacting the other two nodes in the same column. The system is resilient upto $5$ node failures.  On the other hand, local FR codes are resilient only a single node failure. Hence $\rho^{res} - 1=5$ and $\rho_{loc}^{res}=1$. Moreover, contacting any four nodes recovers at least $7$ distinct symbols so that the file size is $\calM = 7$. Therefore, the minimum distance of the code is $12$ for the filesize $\mathcal{M}=7$. }
\label{DSS-(15,4,4,2)}
\end{figure}

\begin{definition}
Let $\Omega = [\theta] = \{1, 2, \dots, \theta \}$ and $V_i, i = 1, \dots, d$ be subsets of $\Omega$. Let $V = \{V_1, \dots, V_d\}$ and consider $A \subset \Omega$ with $|A| = d\beta$. We say that $A$ is $\beta$-recoverable from $V$ if there exists $B_i \subseteq V_i$ for each $i= 1, \dots, d$ such that $B_i \subset A, |B_i| = \beta$ and $\displaystyle \cup_{i=1}^d B_i = A$.
\end{definition}
\begin{definition}
\label{defn:fr_code}
A fractional repetition (FR) code $\calC = (\Omega, V)$ for a $(n,k,d)$ DSS with repetition degree $\rho$ and normalized repair bandwidth $\beta = \alpha/d$  ($\alpha$ and $\beta$ are positive integers) is a set of $n$ subsets $V=\{V_1, \dots, V_n\}$ of a symbol set $\Omega = [\theta]$ with the following properties.
\begin{itemize}
\item[(a)] The cardinality of each $V_i$ is $\alpha$.
\item[(b)] Each element of $\Omega$ belongs to $\rho$ sets in $V$.
\item[(c)] Let $V^{surv}$ denote any $(n- \rho^{res} + 1)$ sized subset of $V$ and $V^{fail} = V \setminus V^{surv}$. Each $V_j \in V^{fail}$ is $\beta$-recoverable from some $d$-sized subset of $V^{surv}$.
\end{itemize}
\end{definition}
Note that we only consider FR codes without repeated storage nodes to avoid trivialities. It can be observed that $\rho^{res} \leq \rho$ is a measure of the resilience of the system to node failures, while still allowing exact and uncoded repair. We define the code rate of the system as $\frac{\calM}{n\alpha}$. 

For a FR code we define $a(\delta) = \min_{\{V_1, \dots, V_{\delta}\} \in \mathcal{I}} |\cup_{i=1}^\delta V_i|$, where $\mathcal{I}$ is the set of all $\delta$-sized subsets of $V$, i.e., $a(\delta)$ is the minimum number of symbols accumulated when union of $\delta$ storage nodes from $V$ is considered. We say that nodes $V_1, \dots, V_{\delta}$ cover at least $\zeta$ symbols if $|\cup_{i=1}^\delta V_i| \geq \zeta$. A FR code is in one-to-one correspondence with a 0-1 matrix of dimension $|V| \times |\Omega|$ (called the incidence matrix), where the $(i,j)$-th entry of the matrix is 1 if the $i$-th storage node contains the $j$-th symbol. Note that in an FR code we do not have any restriction on the repair degree $d$. 

\begin{definition} {\it Locally recoverable fractional repetition code.} Let $\mathcal{C} = (\Omega, V)$ be a FR code for a $(n,k,d)$ DSS, with repetition degree $\rho$ and normalized repair bandwidth $\beta= \alpha/d$. Let $r$ denote the local repair degree where $r < k$ and $r \leq d$. A node $V_i$ of $\mathcal{C}$ is said to be locally recoverable if there exists a set $W_i \subset V$ such that $V_i \in W_i$ and $V_i$ is $\alpha/r$-recoverable from $W_i \setminus V_i$. We call $W_i$ the local structure associated with node $V_i$. The FR code $\mathcal{C}$ is locally recoverable if all nodes in $V$ belong to at least one local structure.
\end{definition}

Let $\rho^{res}_{loc}$ denote the maximum number of node failures such that each failed node has at least one local structure in the set of surviving nodes. We call $\rho^{res}_{loc}$ the local failure resilience of the DSS.
We note that it is possible that the local structures themselves are FR codes; in this case we call them local FR codes. 


\begin{definition}{\it Minimum Distance.}
The minimum distance of a DSS denoted $d_{\min}$ is defined to be the size of the smallest subset of storage whose failure guarantees that the file is not recoverable from the surviving nodes.
\end{definition}
It is evident that $d_{\min} - 1 \geq \rho^{res} - 1 \geq \rho^{res}_{loc}$. In our constructions in Section \ref{sec:code_cons_bounds}, we will evaluate the different code designs on these parameters.




Bounds on the minimum distance of locally recoverable codes have been investigated in prior work. Specifically, \cite{gopalan12} considers the case of scalar ($\alpha = 1$) storage nodes and \cite{papD12} considers vector ($\alpha > 1$) storage nodes.
\begin{lemma}\label{local bound} Consider a locally recoverable DSS with parameters $n,k,r,\alpha$ with file size $\calM$ with minimum distance $d_{\min}$. Then,
\begin{align*}
d_{\min} \leq n- \left \lceil \frac{\calM}{\alpha}  \right \rceil -\left \lceil \frac{\calM}{r\alpha} \right \rceil + 2.
\end{align*}
\end{lemma}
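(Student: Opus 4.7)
The plan is to exhibit a subset $\Gamma\subseteq V$ of storage nodes with $|\Gamma|\geq\lceil\calM/\alpha\rceil+\lceil\calM/(r\alpha)\rceil-2$ whose symbol-union $|\cup_{V_i\in\Gamma}V_i|$ is strictly less than $\calM$. Since the outer code is a $(\theta,\calM)$-MDS code, fewer than $\calM$ distinct encoded symbols cannot determine the file, so failing all nodes in $V\setminus\Gamma$ leaves the survivors unable to decode; this yields $d_{\min}\leq n-|\Gamma|$, which after substitution gives the stated bound. To construct $\Gamma$ I would grow it greedily via two kinds of moves: (i) \emph{annex a local group}: pick a node $V_j\notin\Gamma$ and adjoin $W_j\setminus\Gamma$; by the local recovery property $V_j$'s $\alpha$ symbols lie inside the union of the other $r$ members of $W_j$, so the symbol union of $W_j$ has size at most $r\alpha$, meaning such a step adds up to $r+1$ nodes for at most $r\alpha$ new symbols; and (ii) \emph{append a singleton} $V\notin\Gamma$, contributing one node and at most $\alpha$ symbols. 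The first move is the cheaper of the two on a per-symbol basis, which is what buys the extra $\lceil\calM/(r\alpha)\rceil-1$ nodes beyond the naive MDS-style $\lceil\calM/\alpha\rceil-1$.

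Setting $L=\lceil\calM/\alpha\rceil$, executing $s=\lfloor(L-1)/r\rfloor$ local-group moves followed by $m=L-1-sr$ singleton moves keeps the symbol-union size at most $s\cdot r\alpha+m\alpha=(L-1)\alpha\leq\calM-1$, while the node count equals $s(r+1)+m=s+(sr+m)=s+L-1$. Using the identities $\lfloor(L-1)/r\rfloor=\lceil L/r\rceil-1$ and $\lceil L/r\rceil=\lceil\calM/(r\alpha)\rceil$, this count is precisely $\lceil\calM/\alpha\rceil+\lceil\calM/(r\alpha)\rceil-2$, matching the target. The main obstacle is that local groups may overlap so heavily with the current $\Gamma$ that a group move adds fewer than $r+1$ fresh nodes, causing the naive tally to fall short. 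This is handled by the standard accounting of Gopalan et al.\ and Papailiopoulos--Dimakis: a submodularity/exchange argument lets one interleave and reorder the moves so that the cumulative ratio of rank gained to nodes added never exceeds that of the fully disjoint case, preserving both the node count and the rank budget. Substituting $|\Gamma|$ into $d_{\min}\leq n-|\Gamma|$ then delivers the claimed inequality.
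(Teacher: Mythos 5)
The paper does not actually prove this lemma --- it is quoted from \cite{gopalan12} (scalar case) and \cite{papD12} (vector case) --- but your framework (greedily grow $\Gamma\subseteq V$ whose symbol union has size $<\calM$, so that failing $V\setminus\Gamma$ blocks decoding and hence $d_{\min}\le n-|\Gamma|$) is exactly the approach of those references and of the paper's own Appendix proof of the tighter Lemma~\ref{minimum distance}. Your disjoint-groups calculation and the integer identities are correct, and $W_j$ does have $r+1$ members with symbol union of size at most $r\alpha$, since $V_j\subseteq\bigcup_{V\in W_j\setminus\{V_j\}}V$.

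The gap is the overlap case, which you flag but do not close. Asserting that ``a submodularity/exchange argument lets one interleave and reorder the moves'' does not describe how the count is actually saved; reordering cannot repair the fact that a group move contributing only $s_i<r+1$ fresh nodes makes the tally $s(r+1)+m$ overstate $|\Gamma|$. The real repair is a per-step increment bound you never state: if $V_j\notin\Gamma$ and you adjoin $W_j\setminus\Gamma$, the number of new symbols satisfies $h_i\le(s_i-1)\alpha$, because every symbol of $V_j$ lies in $\bigcup_{V\in W_j\setminus\{V_j\}}V$, the members of $W_j$ already in $\Gamma$ contribute nothing new, and so all new symbols sit inside the other $s_i-1$ newly added nodes. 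Combined with $h_i\le r\alpha$ and the natural stopping rule (do group moves while some fresh $V_j$ satisfies $H(\Gamma\cup W_j)<\calM$, then switch to singletons), a direct summation gives, after $\ell$ group moves, both $|\Gamma|\ge\ell+\lceil\calM/\alpha\rceil-1$ and $\ell\ge\lceil\calM/(r\alpha)\rceil-1$; adding these yields the stated bound with no exchange lemma at all. Replacing the ``reorder'' hand-wave with this per-step bound and stopping analysis would complete the proof and would also match the algorithmic template the paper itself uses in the Appendix for Lemma~\ref{minimum distance}.
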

Note that if a code is optimal with respect to Lemma $\ref{local bound}$, then the file can be recovered from $d_{\min}-1$ node erasures. This implies that $n-d_{\min} + 1 \geq k$. Therefore, a code is equivalently optimal if $  \left \lceil \frac{\calM}{\alpha}  \right \rceil + \left \lceil \frac{\calM}{r\alpha} \right \rceil \geq k +1$ and every set of $k$ nodes can reconstruct the file.

The minimum distance bound was tightened by \cite{kamath12,rawat12} when each storage node participates in a local code with minimum distance at least two; this allows for local recovery when there is more than one failure. However, for the class of codes that we consider, our bound (see Section \ref{sec:code_cons_bounds}) is tighter.


\section{Code constructions and bounds}
\label{sec:code_cons_bounds}


In this section, we present several code constructions and a minimum distance bound for a specific system architecture where the local structures are also FR codes.
\subsection{Codes for systems with $\rho_{loc}^{res} = 1$}
Our first construction is a class of codes which is optimal with respect to the bound provided in Lemma \ref{local bound} and allow local recovery in the presence of a single failure. Our construction leverages the properties of graphs with large girth.
\begin{definition} An undirected graph $\Gamma$ is called an $(s,g)$-graph if each vertex has degree $s$, and the length of the shortest cycle in $\Gamma$ is $g$.
\end{definition}
\begin{construction}
\label{grcons1}
Let $\Gamma=(V',E')$ be a $(s,g)$-graph with $|V'|=n$.
\begin{itemize}
\item[(i)] Arbitrarily index the edges of $\Gamma$ from 1 to $\frac{ns}{2}$.
\item[(ii)] Each vertex of $\Gamma$ corresponds to a storage node and stores the symbols incident on it.
\end{itemize}
\end{construction}
It can be observed that the above procedure yields an FR code $\mathcal{C} = (\Omega, V)$ with $n$ storage nodes, parameters $\theta = \frac{ns}{2}$, $\alpha = s$ and $\rho = 2$. Upon single failure, the failed node can be regenerated by downloading one symbol each from the storage nodes corresponding to the vertices adjacent to it in $\Gamma$ (i.e., $\beta_{loc} = 1$); thus, $r = s$.

\begin{remark} We note that the work of \cite{el10} also used the above construction for MBR codes where the file size was guaranteed to be  at least $k\alpha -\binom{k}{2}$; however, they did not have the restriction that $\Gamma$ is a $(s,g)$-graph. As we discuss next, $(s,g)$-graphs allow us to construct locally recoverable codes and provide a better bound on the file size when $k \leq g$. We allow the system parameter $k$ to be greater than $d$, however in the work of \cite{el10}, they consider only the case $k\leq d$.
\end{remark}
\begin{lemma} \label{lemma:grcons1_coverage} Let $\mathcal{C} = (\Omega, V)$ be a FR code constructed by Construction \ref{grcons1}. If $s > 2$, and $k \leq g$, we have $|\cup_{i=1}^k V_i| \geq k(s-1)$ for any $V_i \in V, i = 1, \dots k$
\end{lemma}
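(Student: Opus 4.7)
The plan is to identify each storage node $V_i$ with the corresponding vertex $v_i \in V'$ of $\Gamma$, so that $V_i$ is exactly the set of edges of $\Gamma$ incident to $v_i$, and then to reduce the claim to a purely graph-theoretic statement about edge counts in induced subgraphs. Fix any $k$ nodes and let $H$ denote the subgraph of $\Gamma$ induced by the corresponding vertices $v_1,\dots,v_k$. Since each edge of $\Gamma$ lies in exactly $\rho=2$ of the sets $V_i$ and $|V_i|=s$, a simple double-count -- edges with both endpoints in $\{v_1,\dots,v_k\}$ are counted twice in $\sum_i |V_i|$, while edges with exactly one endpoint among them are counted once -- yields
\[
\Bigl|\bigcup_{i=1}^k V_i\Bigr| \;=\; ks \;-\; |E(H)|.
\]
The lemma therefore reduces to showing $|E(H)| \le k$ whenever $k \le g$.

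For this reduced statement I would split on $k$. The cases $k \le 3$ are immediate from $|E(H)|\le \binom{k}{2}\le k$, so the interesting regime is $k\ge 4$. Suppose for contradiction that $|E(H)|\ge k+1$. Then $H$ is not a forest and contains a cycle; let $C$ be a shortest cycle of $H$ and $\ell$ its length. Every cycle of $H$ is a cycle of $\Gamma$, so $\ell \ge g \ge k$, while $\ell \le |V(H)| = k$. This forces $\ell = k$, so $H$ is connected and $C$ is a Hamiltonian cycle of $H$.

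Since $|E(H)| > |E(C)|$, there exists a chord $e=\{u,v\}\in E(H)\setminus E(C)$. Because $\Gamma$ is simple, $u$ and $v$ cannot be adjacent on $C$ (otherwise $e$ would coincide with the edge of $C$ between them), so the two arcs of $C$ between them have lengths $l_1,l_2 \ge 2$ with $l_1 + l_2 = k$. Combining $e$ with either arc produces a cycle of $H$ of length $l_1+1$ or $l_2+1$, each at most $k-1 < g$, contradicting the girth of $\Gamma$. Hence $|E(H)| \le k$, and combining with the double count gives $\bigl|\bigcup_{i=1}^k V_i\bigr| \ge ks - k = k(s-1)$.

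The main obstacle is the chord step: one must argue that every additional edge of $H$ beyond a Hamiltonian cycle necessarily produces a cycle strictly shorter than $g$. This is precisely where the hypothesis $k \le g$ is consumed, together with the simplicity of $\Gamma$ which rules out parallel edges between the chord endpoints. The hypothesis $s > 2$ plays no role in the counting argument itself and appears only to ensure that the resulting bound $k(s-1)$ is nontrivial for the downstream constructions.
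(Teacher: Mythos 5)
Your proof is correct, and it takes a genuinely different route from the paper's. The paper argues incrementally: it adds the $k$ nodes one at a time, maintains the invariant $|\cup_{i=1}^j V_i|\geq j(s-1)+\xi_j$ where $\xi_j$ is the number of connected components spanned so far (no cycles can close while $j<g$), and then handles the final step with a case split on $\xi_{k-1}\geq 2$ versus $\xi_{k-1}=1$, where a single cycle may close if $k=g$. You instead prove the exact global identity $|\cup_{i=1}^k V_i|=ks-|E(H)|$ for the induced subgraph $H$ on the chosen vertices, reducing the lemma to the clean extremal statement that a $k$-vertex induced subgraph of a graph of girth $g\geq k$ has at most $k$ edges; the forest/Hamiltonian-cycle dichotomy plus the chord argument (a chord of a Hamiltonian $k$-cycle creates a cycle of length at most $k-1<g$) finishes it. Both proofs consume the hypothesis $k\leq g$ in the same place conceptually -- forbidding short cycles among the selected vertices -- but your version buys an exact identity rather than a running lower bound, makes transparent exactly when equality $|\cup V_i|=k(s-1)$ can occur (namely when $H$ is a $k$-cycle, forcing $k=g$), and avoids tracking connected components; the paper's induction, on the other hand, yields the slightly stronger intermediate bounds $j(s-1)+\xi_j$ for $j<k$, which is the kind of bookkeeping that generalizes to partial selections. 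Your closing remarks are also accurate: simplicity of $\Gamma$ (which in the relevant case $k\geq 4$ also follows from $g\geq k\geq 4$) is what forces the chord's arcs to have length at least $2$, and $s>2$ is not needed for the counting itself.
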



\begin{proof}
Let $V_1, V_2,\cdots,V_{k-1}$ and $V_{k}$ be any $k$ nodes in our DSS, where $k \leq g$. We argue inductively. Note that $|V_1| = s > s-1$. Suppose that $|\cup_{i=1}^j V_i| \geq j(s-1) + \xi$ for $j < k$, where $\xi \leq j$ is the number of connected components formed by the nodes $V_1, \dots, V_j$ in $\Gamma$. Now consider $|\cup_{i=1}^{j+1} V_i|$ where $j+1 < k$. Note that since $j+1 < g$ there can be no cycle in $\cup_{i=1}^{j+1} V_i$. Thus, $V_{j+1}$ is connected at most once to each connected component in $\cup_{i=1}^j V_i$. Suppose that $V_{j+1}$ is connected to $\ell$ existing connected components in $\cup_{i=1}^j V_i$, where $0 \leq \ell \leq s$. Then, the number of connected components in $\cup_{i=1}^{j+1} V_i$ is $\xi - \ell + 1$ and the number of new symbols that it introduces is $s - \ell$. Therefore $|\cup_{i=1}^{j+1} V_i| = j(s-1) + \xi + s - \ell = (j+1)(s-1) + \xi -\ell + 1$. This proves the induction step.

Thus, $|\cup_{i=1}^{k-1} V_i| \geq (k-1)(s-1) + \xi_{k-1}$, where $\xi_{k-1}$ is the number of connected components formed by $V_1, \dots, V_{k-1}$. Now consider $\cup_{i=1}^{k} V_i$. Note that there can be a cycle introduced at this step if $k=g$. Now, if $\xi_{k-1} \geq 2$, it can be seen that $V_k$ can only connect to each of the $\xi_{k-1}$ connected components once, otherwise it would imply the existence of a cycle of length strictly less than $g$ in $\Gamma$. Thus, in this case $|\cup_{i=1}^{k} V_i| \geq k(s-1)$. On the other hand if $\xi_{k-1} = 1$, then $V_k$ can connect at most twice to this connected component. In this case again we can observe that $|\cup_{i=1}^{k} V_i| \geq k(s-1)$.
\end{proof}

\begin{lemma} Let $\Gamma=(V,E)$ be a $(s,g)$-graph with $|V|=n$ and $s>2$. If $g\geq k=as+b$ such that $s > b \geq a+1$, then $\mathcal{C}$ obtained  from $\Gamma$ by Construction \ref{grcons1} is optimal with respect to bound in Lemma \ref{local bound} when the file size $\calM=k(s-1)$.
\end{lemma}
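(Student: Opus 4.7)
The plan is to invoke the equivalent optimality criterion stated right after Lemma \ref{local bound}: it suffices to show (i) $\lceil \mathcal{M}/\alpha \rceil + \lceil \mathcal{M}/(r\alpha) \rceil \geq k+1$, and (ii) every set of $k$ storage nodes can reconstruct the file. For Construction \ref{grcons1} we have $\alpha = s$ and $r = s$, so (i) is a pure arithmetic check on $\lceil k(s-1)/s \rceil + \lceil k(s-1)/s^2 \rceil$, while (ii) follows from Lemma \ref{lemma:grcons1_coverage} combined with the outer MDS code.

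First I would verify the arithmetic in (i). Writing $k=as+b$ with $s>b\geq a+1$, a direct computation gives $k(s-1)/s = k - a - b/s$ with $0 < b/s < 1$, so $\lceil k(s-1)/s \rceil = k-a$. For the second term, $k(s-1)/s^2 = a + (b(s-1)-as)/s^2$; the hypothesis $b \geq a+1$ together with $s \geq a+2$ (forced by $s>b\geq a+1$) yields $b(s-1)-as \geq s-a-1 \geq 1$, so the fractional part lies in $(0,1)$ and $\lceil k(s-1)/s^2 \rceil = a+1$. Adding these, $(k-a)+(a+1) = k+1$, so (i) holds with equality.

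For (ii), I would apply Lemma \ref{lemma:grcons1_coverage} directly: since $g \geq k$ and $s > 2$, any $k$ storage nodes cover at least $k(s-1) = \mathcal{M}$ distinct symbols of $\Omega$. Because $y_1,\dots,y_\theta$ are encoded using a $(\theta,\mathcal{M})$-MDS code, access to any $\mathcal{M}$ distinct $y_j$'s recovers the file, so every $k$-subset of nodes reconstructs it. Combining (i) and (ii) with the reformulated optimality condition gives $d_{\min} = n-k+1$, which matches the bound $n-\lceil \mathcal{M}/\alpha\rceil-\lceil \mathcal{M}/(r\alpha)\rceil+2 = n-(k-a)-(a+1)+2 = n-k+1$ from Lemma \ref{local bound}.

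The only real obstacle is bookkeeping with the ceiling functions: one must be careful to confirm that the fractional parts of $k(s-1)/s$ and $k(s-1)/s^2$ both land strictly in $(0,1)$ under the precise hypotheses $s>b\geq a+1$ (the inequality $b \geq a+1$ is exactly what forces the second ceiling to round up by $1$ rather than $0$, which is where the $+1$ in $k+1$ comes from). Once that is pinned down, the rest is immediate from Lemma \ref{lemma:grcons1_coverage} and the MDS property of the outer code.
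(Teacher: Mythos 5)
Your proposal is correct and follows essentially the same route as the paper: both invoke the equivalent optimality criterion stated after Lemma \ref{local bound}, both use Lemma \ref{lemma:grcons1_coverage} (together with the outer MDS code) to establish that any $k$ nodes can reconstruct the file, and both reduce the remaining work to the same ceiling computation with $\alpha = r = s$, differing only in how the algebra is arranged (you split off the integer and fractional parts directly, the paper expands $k(s-1) = as^2 + (b-a)s - b$ and divides through). Your observation that the second ceiling in fact equals $a+1$ (not merely is at least $a+1$) is a minor refinement over the paper, which only proves the inequality it needs, but the argument is the same in substance.
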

\begin{proof} From Lemma \ref{lemma:grcons1_coverage}, any $k$ nodes cover at least $k(s-1)$ symbols. Thus, the code is optimal when the following holds.
$$\displaystyle \left \lceil\frac{k(s-1)}{s} \right\rceil+ \left\lceil\frac{k(s-1)}{s^2}\right\rceil  \geq k+1.$$ We have 
$$k(s-1)=(as+b)(s-1)=as^2+(b-a)s-b.$$
Since, $s > b\geq a+1$ the following holds.
$$\displaystyle \left\lceil\frac{k(s-1)}{s}\right\rceil=\left\lceil\frac{as^2+(b-a)s-b}{s}\right\rceil=as+(b-a),$$
and
$$\displaystyle  \left\lceil\frac{as^2+(b-a)s-b}{s^2}\right\rceil=\left\lceil a+ \frac{(b-a)s-b}{s^2}\right\rceil\geq a+1 .$$

\end{proof}
\begin{corollary}\label{corollary_constr_1}
Let $\Gamma=(V,E)$ be a $(s,g)$-graph with $|V|=n$ and $s>2$. If $g\geq s+2$, then $\mathcal{C}$  obtained  from $\Gamma$ by Construction \ref{grcons1} is optimal with respect to the bound in Lemma \ref{local bound} for file size $\calM=s^2+s-2$.
\end{corollary}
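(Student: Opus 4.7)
The plan is to observe that this corollary is an immediate specialization of the preceding lemma, and simply verify that the hypotheses of that lemma hold with an appropriate choice of $k$.

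First I would set $k = s+2$ and check that this $k$ fits the parametrization $k = as+b$ required by the previous lemma. Writing $s+2 = 1 \cdot s + 2$, I take $a=1$ and $b=2$. The two inequality conditions to check are $s > b$ and $b \geq a+1$. The first is $s > 2$, which is given in the hypothesis. The second is $2 \geq 2$, which holds with equality. Furthermore, the girth hypothesis $g \geq s+2$ in the corollary is exactly $g \geq k$ in the notation of the lemma.

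Next I would verify that the target file size matches: the previous lemma guarantees optimality at file size $\calM = k(s-1)$. Substituting $k = s+2$ gives $\calM = (s+2)(s-1) = s^2 + s - 2$, which is precisely the file size stated in the corollary. Since all the hypotheses of the previous lemma are satisfied and the file sizes agree, the code $\mathcal{C}$ obtained from $\Gamma$ by Construction \ref{grcons1} is optimal with respect to the bound of Lemma \ref{local bound} at $\calM = s^2 + s - 2$.

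There is no real obstacle here; the corollary is a direct instantiation of the previous lemma at the boundary case $b = a+1$, and the only nontrivial thing to notice is that $k = s+2$ is the choice that makes $k(s-1)$ equal to the desired $s^2 + s - 2$.
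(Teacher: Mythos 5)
Your proof is correct and is exactly the intended derivation: the paper gives no separate argument for the corollary, treating it as the instantiation $k=s+2$ (so $a=1$, $b=2$, satisfying $s>b\geq a+1$ and $g\geq k$) of the preceding lemma, with $\calM=k(s-1)=(s+2)(s-1)=s^2+s-2$. Nothing is missing.
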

It can be observed that in the specific case of $s=2$, applying Construction \ref{grcons1} results in a DSS where the union of any $k$ nodes has at least $k+1$ symbols. We now discuss some examples of codes that can be obtained  from our constructions.

\begin{remark}
Sachs \cite{Sachs} provided a construction which shows that for all $s,g \geq 3$, there exists a $s$-regular graph of girth $g$. Also, explicit constructions of graphs with arbitrarily large girth are known \cite{Lazebnik95}. Using these we can construct infinite families of optimal locally recoverable codes. 
\end{remark}

\begin{example}
The Petersen graph on 10 vertices and 15 edges can be shown to be $(3,5)$-graph. We label the edges $1, \dots, 10$ and $A, B, \dots, E$ in Fig. \ref{Ptrgraph}. Let the filesize $\calM = 3^2 +3 -2 = 10$; we use a $(15,10)$ outer MDS code. Applying Construction \ref{grcons1}, we obtain a DSS with parameters $n=10,k=5,\alpha=3,\rho=2,r=3$. From Corollary \ref{corollary_constr_1}, we observe that the DSS meets the minimum distance bound.


\begin{figure} [t]
\centering
\includegraphics[scale=0.25]{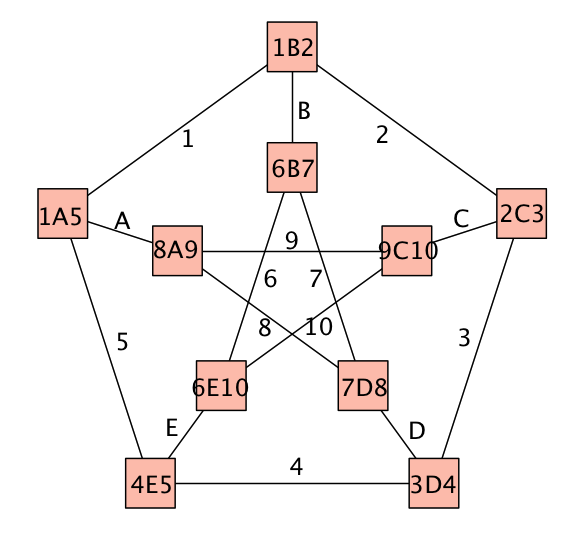}
\caption{{\small The figure shows the Petersen graph with its edges labeled from 1, \dots, 10 and $A, \dots, E$. Each vertex acts as a storage node and stores the symbols incident on it.}}
\label{Ptrgraph}
\vspace{-0.2in}
\end{figure}


\end{example}

An $(s,g)$-graph with the fewest possible number of vertices, among all $(s,g)$-graphs is called an $(s,g)$-cage and will result in the maximum code rate for our construction. For instance, the $(3,5)$-cage is the Petersen graph. 

We note here that bipartite cages of girth 6 were used to construct FR codes in \cite{kooG11} though these were not in the context of locally recoverable codes. It can be seen that Construction \ref{grcons1} can be extended in a straightforward way to larger filesizes.

\subsection{Codes for systems with $\rho_{loc}^{res} > 1$}
Our second class of codes are such that the local structures are also FR codes. The primary motivation for considering this class of codes is that they naturally allow for local recovery in the presence of more than one failure as long as the local FR code has a repetition degree greater than two. Thus, in these codes, each storage node participates in one or more local FR codes that allow local recovery in the presence of failures. For these classes of codes, we can derive the following tighter upper bound on the minimum distance (the proof appears in the Appendix) when the file size is larger than the number of symbols in one local structure.
\begin{lemma} \label{minimum distance} Let $\mathcal{C}$ be a locally recoverable FR code with parameters $(n,\theta,\alpha,\rho)$ where each node belongs to a local FR code with parameters $(n_{loc},\theta_{loc},\alpha,\rho_{loc})$. Suppose that the file size $\calM > \theta_{loc}$. Then, 
\[
\begin{split}
d_{min}& \leq \max \bigg{(}n- \left \lceil \frac{\calM\rho_{loc}}{\alpha} \right \rceil+\rho_{loc},\\
&n+n_{loc}+1- \left \lceil \frac{\calM\rho_{loc}+\theta_{loc}}{\alpha} \right \rceil \bigg{)}.
\end{split}
\]
\end{lemma}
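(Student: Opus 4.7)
The plan is to upper bound $d_{\min}$ by exhibiting, for each of the two terms inside the $\max$, a set $S \subseteq V$ of surviving storage nodes whose union covers strictly fewer than $\calM$ distinct symbols; since the outer code is MDS, failing the complementary $n - |S|$ nodes then prevents file reconstruction, so $d_{\min} \leq n - |S|$. Both constructions build $S$ out of a collection of entire local FR codes together with a partial contribution from one additional local code, exploiting the fact that a whole local code is ``symbol-efficient'': it contributes $n_{loc}$ nodes while introducing only $\theta_{loc}$ fresh symbols (I assume for clarity that distinct local FR codes have pairwise disjoint symbol sets; the general case is analogous, relying on the identity $n_{loc}\alpha = \theta_{loc}\rho_{loc}$ inside a single local code). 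Write $\calM - 1 = \ell\,\theta_{loc} + s$ with $0 \leq s < \theta_{loc}$; since $\calM > \theta_{loc}$, we have $\ell \geq 1$.

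For the \emph{first} bound I would round $\calM-1$ \emph{down}: take $\ell$ complete local FR codes and append a subset $T$ of nodes from one additional local code whose combined coverage is at most $s$ symbols. Double counting the symbol incidences inside the partial local code gives $|T|\alpha \leq (\text{symbols covered by }T)\cdot \rho_{loc}$, so by picking $T$ suitably one obtains $|T|$ close to $s\rho_{loc}/\alpha$. After substituting $n_{loc} = \theta_{loc}\rho_{loc}/\alpha$, the size $|S| = \ell n_{loc} + |T|$ rearranges, with a slack of $\rho_{loc}$ that absorbs the ceiling, into $|S| \geq \lceil \calM\rho_{loc}/\alpha\rceil - \rho_{loc}$, which gives the first bound.

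For the \emph{second} bound I would round \emph{up} instead: take $\ell+1$ complete local FR codes, which over-cover by $\theta_{loc} - s$ symbols, and then shave off the excess. Concretely, choose any $\theta_{loc} - s$ symbols inside the $(\ell+1)$-th local code and delete every node containing at least one of them; at most $(\theta_{loc} - s)\rho_{loc}$ nodes disappear, so
\[
|S| \geq (\ell+1)n_{loc} - (\theta_{loc} - s)\rho_{loc},
\]
and the total coverage of $S$ is exactly $\calM-1$. Substituting $\theta_{loc}\rho_{loc} = n_{loc}\alpha$ and $\calM = \ell\theta_{loc} + s + 1$ converts this into $|S| \geq \lceil (\calM\rho_{loc}+\theta_{loc})/\alpha\rceil - n_{loc} - 1$, i.e., the second bound. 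The main obstacle will be the bookkeeping: carefully verifying that the partial-code sub-selection in the round-down construction actually contains the required number of nodes covering only $s$ symbols, and that the ceiling/floor arithmetic aligns cleanly in both cases. Taking the $\max$ of the two is necessary because neither construction dominates the other — the round-down strategy wins when $\theta_{loc}$ is small relative to $\alpha$ and $\rho_{loc}$, and the round-up one wins in the opposite regime.
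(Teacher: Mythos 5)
The proposal has a fundamental logical gap. The lemma states $d_{\min} \leq \max(A,B)$ where $A = n - \lceil \calM\rho_{loc}/\alpha\rceil + \rho_{loc}$ and $B = n+n_{loc}+1 - \lceil (\calM\rho_{loc}+\theta_{loc})/\alpha\rceil$. Your plan exhibits, for \emph{each} term, a set $S$ with $|S| = n - (\text{that term})$ covering fewer than $\calM$ symbols. If that succeeded for every code it would prove $d_{\min} \leq \min(A,B)$, which is strictly stronger than the lemma — and is in fact false. Here is a counterexample to your first (round-down) construction: take the local FR code to be $K_4$ viewed as a storage graph ($n_{loc}=4$, $\theta_{loc}=6$, $\alpha=3$, $\rho_{loc}=2$), let $\bar{\mathcal{C}}$ be a disjoint union of several copies, and set $\ell=1$, $s=4$, so $\calM=11$. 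The first bound requires $|S| \geq \lceil 22/3\rceil - 2 = 6$ nodes covering at most $10$ symbols. But any $3$ vertices of $K_4$ already touch all $6$ edges, and any $2$ vertices touch $5$ edges, so every way of distributing $6$ vertices over copies of $K_4$ covers at least $11$ symbols; no such $S$ exists. Your appeal to ``double counting $|T|\alpha \leq (\text{coverage})\cdot\rho_{loc}$'' only gives an \emph{upper} bound on $|T|$, not the \emph{lower} bound $|T| \approx s\rho_{loc}/\alpha$ you need; equality requires a very concentrated overlap structure in the local code that need not exist (here $\rho_{loc}=2$ is too small). Your closing remark that ``taking the $\max$ is necessary because neither construction dominates the other'' does not repair this: if both constructions always produced the claimed $|S|$, the bound would be the $\min$, not the $\max$.

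The paper proves the lemma the right way around: it runs an accretion algorithm (in the spirit of Gopalan et al.) that greedily adds local FR codes, occasionally absorbing a partial one, and then splits into two exhaustive cases determined by how the algorithm terminates. If the algorithm never succeeds in adding a partial local code (it exits on line 10), then any further node would push the symbol count to $\calM$, which forces $\sum h_i \geq \calM-\alpha$ and yields bound $A$. If it does absorb a partial local code, one gets $\sum h_i \geq \calM-\theta_{loc}$ from the full-code rounds plus $\lceil(\calM-\sum h_i)/\alpha\rceil - 1$ partial nodes, yielding bound $B$. Exactly one case occurs for any given code, so $d_{\min} \leq \max(A,B)$. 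This case split is precisely the mechanism your plan lacks. Two further points: your argument assumes disjoint local FR codes (the paper handles overlapping local structures via the quantity $a(n_{loc}-s_i)$), and your round-up construction's lower bound $(\ell+1)n_{loc} - (\theta_{loc}-s)\rho_{loc}$ can be negative (e.g.\ for the Fano plane with $s$ small), so it does not by itself deliver the second term either.
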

The following corollary can be also be established (see Appendix). 

\begin{corollary}\label{mincor} Let $\mathcal{C}$ be a locally recoverable FR code with parameters $(n,\theta,\alpha,\rho)$ where each node belongs to a local FR code with parameters $(n_{loc},\theta_{loc},\alpha,\rho_{loc})$. Furthermore, suppose that $\mathcal{C}$ can be partitioned as the union of $\ell$ disjoint local FR codes. If the file size $\calM=t\theta_{loc}+\beta$ for some integer $1 \leq t < \ell$ and $\beta \leq \alpha$, we have $\displaystyle d_{min} \leq n-\left \lceil \frac{\calM\rho_{loc}}{\alpha} \right \rceil+\rho_{loc}.$
\end{corollary}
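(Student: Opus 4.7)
The plan is to construct a bad failure pattern directly, rather than going through Lemma \ref{minimum distance}. Setting $s := \lceil\calM\rho_{loc}/\alpha\rceil-\rho_{loc}$, it suffices to exhibit $s$ surviving nodes whose union covers strictly fewer than $\calM$ symbols of $\Omega$; the outer $(\theta,\calM)$-MDS code then cannot recover the file, so declaring the remaining $n-s$ nodes failed witnesses $d_{\min}\le n-s$.

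First I would translate $s$ into local parameters. Double counting incidences in any local FR code gives $n_{loc}\alpha=\theta_{loc}\rho_{loc}$, so $tn_{loc}=t\theta_{loc}\rho_{loc}/\alpha$ is an integer. Substituting $\calM=t\theta_{loc}+\beta$ yields
\[
s \;=\; tn_{loc} + \left\lceil\frac{\beta\rho_{loc}}{\alpha}\right\rceil - \rho_{loc},
\]
and since $\beta\le\alpha$ forces $\lceil\beta\rho_{loc}/\alpha\rceil\le\rho_{loc}$, we get $s\le tn_{loc}$.

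Next I would exploit the partition hypothesis. Write $\mathcal{C}=\mathcal{C}_1\sqcup\cdots\sqcup\mathcal{C}_\ell$ as a disjoint union of local FR codes; by disjointness the symbol sets used by distinct $\mathcal{C}_j$ are pairwise disjoint subsets of $\Omega$, and each $\mathcal{C}_j$ contributes exactly $n_{loc}$ nodes and $\theta_{loc}$ symbols. Because $t<\ell$ and $s\le tn_{loc}$, I can place all $s$ surviving nodes inside $\mathcal{C}_1\cup\cdots\cup\mathcal{C}_t$, whose symbols form a set of size $t\theta_{loc}$. When $\beta\ge 1$, any such choice already gives coverage $\le t\theta_{loc}<\calM$, and we are done.

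The only real obstacle I expect is the boundary case $\beta=0$, where $t\theta_{loc}=\calM$ and an arbitrary choice of surviving nodes inside $\mathcal{C}_1\cup\cdots\cup\mathcal{C}_t$ may actually cover all of $\calM$ symbols. The remedy is to choose more carefully: keep $\mathcal{C}_1,\ldots,\mathcal{C}_{t-1}$ intact and, within $\mathcal{C}_t$, delete the $\rho_{loc}$ nodes incident to one fixed symbol $x$ of $\mathcal{C}_t$. The surviving nodes then miss $x$ and cover at most $t\theta_{loc}-1<\calM$ symbols, while the total number of failures is $(n-tn_{loc})+\rho_{loc}=n-s$, as required. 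Apart from this corner, the argument is essentially the packing inequality $s\le tn_{loc}$ together with disjointness of the local symbol sets, so no combinatorial difficulty arises.
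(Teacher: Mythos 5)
Your proof is correct, and it takes a genuinely different route from the paper's. The paper proves the corollary in one line by referring back to the algorithm developed for Lemma~\ref{minimum distance} and observing that, for a disjoint partition into local FR codes with $\calM = t\theta_{loc}+\beta$, the algorithm never reaches line 8, so the Case~1 bound applies. You instead construct the bad failure pattern directly: using $n_{loc}\alpha = \theta_{loc}\rho_{loc}$ to pull $tn_{loc}$ out of the ceiling, you obtain $s = \lceil \calM\rho_{loc}/\alpha\rceil - \rho_{loc} = tn_{loc} + \lceil \beta\rho_{loc}/\alpha\rceil - \rho_{loc} \le tn_{loc}$, place $s$ survivors inside $t$ of the $\ell$ disjoint local codes (which is possible since $t < \ell$), and observe they cover at most $t\theta_{loc} < \calM$ symbols when $\beta\ge 1$; the $\beta = 0$ corner is handled by failing the $\rho_{loc}$ nodes through a fixed symbol of $\mathcal{C}_t$. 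This yields an explicit erasure pattern witnessing $d_{\min}\le n-s$ without invoking the greedy algorithm or its two-case analysis. The trade-off: the paper's route unifies the corollary under the general Lemma~\ref{minimum distance} machinery (useful since the lemma must be proved anyway), while your route is more elementary, self-contained, and arguably more informative because it exhibits the actual failure set saturating the bound; it also does not need the hypothesis $\calM > \theta_{loc}$ that Lemma~\ref{minimum distance} requires.
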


\begin{construction}\label{design}
Let $\mathcal{C} = (\Omega, V)$ be a FR code with parameters $(n,\theta,\alpha,\rho)$ such that any $\Delta$+1 nodes in $V$ cover $\theta$ symbols and for $V_i, V_j \in V$, we have $|V_i \cap V_j| \leq \beta$ when $i \neq j$. We construct a locally recoverable FR code $\bar{\mathcal{C}}$ by considering the disjoint union of $l (>1)$ copies of $\mathcal{C}$. Thus, $\bar{\mathcal{C}}$ has parameters $(ln,l\theta, \alpha, \beta)$.
\end{construction}
We call $\mathcal{C}$ the local FR code of $\bar{\mathcal{C}}$.
\begin{lemma} \label{lemma:cond_kron_opt} Let $ \bar{\mathcal{C}}$ be a code constructed by Construction \ref{design} for some $l>1$ such that the parameters of the local FR code satisfy $(\rho-1)\alpha\theta-(\theta+\alpha)(\Delta-1)\beta \geq 0$. Let the file size be $\calM= t\theta+\alpha$ for some $1 \leq t < l$. Then $\bar{\mathcal{C}}$ is optimal with respect to Corollary \ref{mincor}.
\end{lemma}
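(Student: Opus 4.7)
The plan is to prove optimality by showing $\bar{\calC}$ attains the Corollary~\ref{mincor} bound, a task that reduces to a combinatorial coverage statement. Using the FR-code incidence identity $\theta\rho=n\alpha$, one computes $(t\theta+\alpha)\rho/\alpha=tn+\rho$, so $\lceil\calM\rho_{loc}/\alpha\rceil=tn+\rho$ and the Corollary bound becomes $d_{\min}\leq(l-t)n$. Since the outer MDS layer recovers the file from any $\calM$ distinct encoded symbols, it suffices to prove that \emph{every set of $tn+1$ nodes of $\bar{\calC}$ covers at least $t\theta+\alpha$ distinct symbols}.

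I would fix such a set $S$ and write $n_j$ for the number of selected nodes in the $j$-th copy of $\calC$, so that $\sum_j n_j=tn+1$ and $0\le n_j\le n$. Since the copies are symbol-disjoint, the total coverage equals $\sum_j|S_j|$, where $|S_j|$ denotes the symbols covered in the $j$-th copy. Let $\sigma=|\{j:n_j\ge\Delta+1\}|$; by the defining property of $\calC$, each such saturated copy contributes exactly $\theta$. The easy regime $\sigma\geq t$ splits in two: $\sigma\ge t+1$ immediately gives $(t+1)\theta\ge t\theta+\alpha$ (using $\theta\ge\alpha$), while $\sigma=t$ forces at least one of the $tn+1$ selected nodes into a non-saturated copy (since the $\sigma$ saturated copies together hold at most $tn$ nodes), and its $\alpha$ symbols are disjoint from everything accumulated so far, again giving $t\theta+\alpha$.

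The technical core is $\sigma\leq t-1$. Here I would combine, in each non-saturated copy, the level-$2$ Bonferroni bound $|S_j|\geq n_j\alpha-\binom{n_j}{2}\beta$---valid because $|V_i\cap V_j|\le\beta$ by the hypothesis of Construction~\ref{design}---with $|S_j|=\theta$ for each saturated copy. Setting $q=t-\sigma\ge 1$ and $N=\sum_{j:n_j\le\Delta}n_j\ge qn+1$, summing yields $\sum_j|S_j|\ge\sigma\theta+N\alpha-\beta\sum_{j:n_j\le\Delta}\binom{n_j}{2}$. Upper-bounding $\sum\binom{n_j}{2}$ by exploiting $n_j\le\Delta$ and rearranging via the identity $n\alpha-\theta=\theta(\rho-1)$ turns the required inequality $\sum_j|S_j|\ge t\theta+\alpha$ into precisely the hypothesis $(\rho-1)\alpha\theta\ge(\theta+\alpha)(\Delta-1)\beta$. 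I expect the principal obstacle to lie in pinpointing the worst-case distribution of the $n_j$'s in this regime so that the factor $(\theta+\alpha)$ of the hypothesis appears cleanly in the accounting; the extremal configuration appears to cluster non-saturated copies near $n_j=\Delta$, and handling those near-saturation copies tightly may additionally require the complementary lower bound $|S_j|\ge\theta-(\Delta+1-n_j)\alpha$ (which follows because adjoining any single further node raises the coverage to the full $\theta$ of a $(\Delta+1)$-set), to be used alongside Bonferroni on copies where $n_j$ is small.
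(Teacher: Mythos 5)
Your reduction to the coverage claim (every $tn+1$ nodes cover at least $t\theta+\alpha$ symbols), the use of $n\alpha=\theta\rho$ to simplify the bound to $(l-t)n$, and the case analysis on the number $\sigma$ of saturated copies all mirror the paper's argument (where your $\sigma$ is the paper's $t_1$, and your $q=t-\sigma$ is its $t-t_1$). The divergence, and the gap, is in the technical core where $\sigma\leq t-1$.

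The paper applies Corradi's lemma, which gives
\[
|X_i|\;\geq\;\frac{\alpha^{2}a_i}{\alpha+(a_i-1)\beta}\;\geq\;\frac{\alpha^{2}a_i}{\alpha+(\Delta-1)\beta},
\]
a bound that is \emph{linear} in $a_i$. Summing over non-saturated copies therefore depends only on $N=\sum a_i=(t-\sigma)n+1$, and substituting $n\alpha=\theta\rho$ together with the hypothesis $(\rho-1)\alpha\theta\geq(\theta+\alpha)(\Delta-1)\beta$ finishes the argument with no optimization over how the $a_i$ are distributed among copies.

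Your Bonferroni bound $|S_j|\geq n_j\alpha-\binom{n_j}{2}\beta$ is quadratic and concave, and the step you assert---that bounding $\sum_j\binom{n_j}{2}$ by $\frac{\Delta-1}{2}\sum_j n_j$ and rearranging ``turns the required inequality into precisely the hypothesis''---is false. Take the affine resolvable local code with $q=2$, $m=3$: $(\theta,\alpha,\rho,n,\Delta,\beta)=(8,4,4,8,4,2)$, which satisfies the hypothesis ($96-72=24\geq0$). For $t-\sigma=1$ and $N=9$, your crude bound yields $\sum|S_j|\geq 9\cdot4-\frac{3\cdot2}{2}\cdot9=9$, which is strictly less than the needed $\theta+\alpha=12$; Corradi gives $\frac{16\cdot9}{10}=14.4$. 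The sharp Bonferroni minimum over admissible distributions is achieved at $(n_1,n_2,n_3)=(4,4,1)$ and equals exactly $12$, so an approach through Bonferroni is not hopeless---but it requires identifying and verifying that extremal configuration (``as many copies as possible at $\Delta$ plus one residual'') and checking the inequality there, possibly also invoking the complementary bound $|S_j|\geq\theta-(\Delta+1-n_j)\alpha$ for large $n_j$. You flag exactly this as the ``principal obstacle'' but do not carry it out, so the proof is incomplete; Corradi's lemma is the device the paper uses to sidestep that optimization entirely.
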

\begin{proof}
It is evident that $\bar{\mathcal{C}}$ is the disjoint union of $l$ local FR codes. Thus, the minimum distance bound here is $d_{min}\leq ln-\left \lceil \frac{(t\theta+\alpha)\rho}{\alpha}\right \rceil+\rho = (l-t)n.$ The code is optimal when any $tn+1$ nodes in $\bar{\mathcal{C}}$ cover at least  $\calM= t\theta+\alpha$ symbols. We show that this is the case below.

Let $a_i$ be the number of nodes that are chosen from the $i$-th local FR code and $X_i$ be the symbols covered by these $a_i$ nodes. Note that for any $1\leq i\leq l$ if $a_i \geq \Delta+1$, then $X_i = \theta$ (the maximum possible). Suppose there are $0 \leq t_1\leq t$ local FR codes that cover $\theta$ symbols. It can be seen that in this case it suffices to show that  $(t-t_1)n+1$ nodes cover at least  $(t-t_1)\theta+\alpha$ symbols. Here we can omit case of $t=t_1$, since our claim clearly holds in this situation. Suppose that these nodes belong to $s$ local FR codes, where $a_i \leq \Delta, i = 1, \dots, s$. By applying Corradi's lemma \cite{jukna11} we obtain
{\small
\[
\begin{split}
|X_i|\geq \frac{\alpha^2a_i}{\alpha+(a_i-1)\beta}
&\geq \frac{\alpha^2a_i}{\alpha+(\Delta-1)\beta}.\\
\end{split}
\]
}
This implies that
{\small
\[
\begin{split}
&\sum_{i=1}^{s}|X_i| \geq \sum_{i=1}^{s}\frac{\alpha^2a_i}{\alpha+(\Delta-1)\beta}\\
&=\frac{\alpha^2}{\alpha+(\Delta-1)\beta}\sum_{i=1}^{s}a_i\\
&=\frac{\alpha^2}{\alpha+(\Delta-1)\beta}((t-t_1)n+1)\\
&= \frac{(t-t_1)\theta\rho\alpha}{\alpha + (\Delta -1)\beta} + \frac{\alpha^2}{\alpha + (\Delta -1)\beta} \text{{\normalsize~(since $n\alpha = \theta\rho$)}}\\
&= (t-t_1)\theta  + \bigg{(}\frac{\rho\alpha}{\alpha + (\Delta -1)\beta} - 1\bigg{)}(t - t_1)\theta + \frac{\alpha^2}{\alpha + (\Delta -1)\beta}\\
&\geq (t-t_1)\theta + \frac{((\rho-1)\alpha - (\Delta -1)\beta)\theta + \alpha^2}{\alpha + (\Delta -1)\beta}\\
&\geq (t-t_1)\theta + \alpha \text{~{\normalsize (using the assumed conditions)}}
\end{split}
\]
}
\end{proof}
The above lemma can be used to generate several examples of locally recoverable codes with $\rho^{res}_{loc} > 1$. We discuss two examples below.
\begin{example}\label{affine resolvable}
In our previous work \cite{olmezR12} we used affine resolvable designs for the construction of FR codes that operate at the MBR point. Let $q$ be a prime power. These codes have parameters $\theta = q^m, \alpha = q^{m-1}, \rho = \frac{q^m-1}{q-1}$ and $n =q \rho$. Moreover the code is resolvable, i.e., we can vary the repetition degree by choosing an appropriate number of parallel classes. Suppose we choose the local FR code by including $q^{m-1}$ parallel classes. Thus, the parameters of the local FR code are $(n,\theta, \alpha,\rho) = (q^m, q^m, q^{m-1},q^{m-1})$. For this code it can be shown that $\Delta = q^m - q^{m-1}$ and that $\beta = q^{m-2}$. It can be observed that this local FR code satisfies the conditions of Lemma \ref{lemma:cond_kron_opt} when $m \geq 3$.

We construct a locally recoverable FR code $\bar{\mathcal{C}}$ by taking the disjoint union of $l>1$ of the above local FR codes. Thus, $\bar{\mathcal{C}}$ has parameters $(lq^m,lq^m,q^{m-1},q^{m-1})$. It can be seen that the code allows for local recovery in the presence of at most $q^{m-1} - 1$ failures, i.e., $\rho^{res}_{loc} = q^{m-1} - 1$. Let the file size be $\calM= tq^m+q^{m-1}$ for some $1\leq t<l$. Then
$ \bar{\mathcal{C}}$ is  optimal with respect to Corollary \ref{mincor}. 
\end{example}


\begin{example}\label{projective plane} A projective plane of order $q$ also forms a FR code $\mathcal{C} = (\Omega, V)$, where $\alpha = q+1$ and $\rho = q+1$. Furthermore, $|V_i \cap V_j| = 1$ if $i \neq j$ and each pair of symbols appears in exactly one node; this further implies that $\beta = 1$. A simple counting argument shows that $|\Omega| = \theta = q^2+q+1$ and $n = q^2 + q + 1$. It can be shown that $\mathcal{C}$ satisfies the conditions of Lemma \ref{lemma:cond_kron_opt} with $\Delta = q^2, \beta = 1$ since any $q^2+1$ nodes cover $q^2+q+1$ symbols.

We construct a locally recoverable FR code $\bar{\mathcal{C}}$ by taking $l>1$ copies of the code $\mathcal{C}$. So the code $\bar{\mathcal{C}}$ has parameters $(l(q^2+q+1),l(q^2+q+1),q+1,q+1)$. Let the file size be $\calM= t(q^2+q+1)+q+1$ for some $1\leq t<l$. Then, $\bar{\mathcal{C}}$ is  optimal with respect to Lemma  \ref{minimum distance} and has $\rho^{res}_{loc} = q$. An example is illustrated in Fig. \ref{Fano}.
\end{example}
It is worth noting that one can also obtain codes using the technique presented above by choosing the local FR code from several other structures including complete graphs and cycle graphs. Owing to space limitations, we cannot discuss all these examples here.
\begin{figure} [t]
\centering
\includegraphics[scale=0.3]{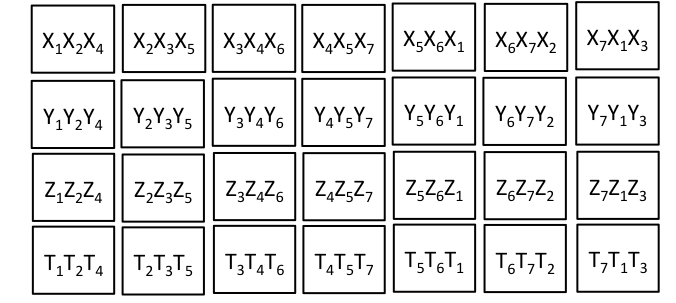}
\caption{The figure shows a DSS where $n=28,k=15,r=3,\theta=28,\alpha =3,\rho =3$ and each local FR code (the rows in the figure) is a projective plane of order $2$ which is also known as a Fano plane. Here $\rho^{res}- 1 = \rho_{loc}^{res}=2$. Any set of $15$ nodes cover at least $\mathcal{M}=17$ symbols. Thus, the minimum distance of the code is $14$ for the filesize $\mathcal{M}=17$.}
\label{Fano}\end{figure}

\section{Appendix}
\begin{figure}[t]
\begin{algorithmic}[1]
\State $S_0=\emptyset$, $i=1$
\While{$H(S_{i-1})<\calM$}
\State For each node $Y_j \in S_{i-1}$, identify an FR code $Pf_j$ (if it exists) such that $Y_j \in Pf_j, Pf_j \nsubseteq S_{i-1}$. If no such FR code exists, find an FR code that has no intersection with $S_{i-1}$ and set $Pf_1$ equal to it.
\begin{itemize} \item Let $b_j = |Pf_j \cap S_{i-1}|$. Let $j^* = \arg \max_j b_j$. \end{itemize}
\If {$\theta_{loc}-b_{j^*}+H(S_{i-1})<\calM$}
\State Set $S_i = S_{i-1}\cup Pf_{j^*}$.
\Else \If {there exists $A \subset  Pf_{j^*}$ such that $|S_{i-1} \cup A| > |S_{i-1}|$ and $H(S_{i-1} \cup A) < \calM$}
        \State Let $Pf^{'}_{j^*} = \arg \max_{A \subset  Pf_{j^*}} H(S_{i-1} \cup A) < \calM$. Set $S_i = S_{i-1}\cup Pf^{'}_{j^*}$.
        \Else
        \State Exit.
        \EndIf
\EndIf
\EndWhile\label{distance while}

\end{algorithmic}
\caption{Algorithm for finding the distance bound}\label{min_dist_algorithm}
\vspace{-0.1in}
\end{figure}

\begin{proof}
We will apply an algorithmic approach here (inspired by the one used in \cite{gopalan12}). Namely, we iteratively construct a set $\mathcal{S} \subset V$ so that $|\mathcal{S}| < \calM$. The minimum distance bound is then given by $n - |\mathcal{S}|$. Our algorithm is presented in Fig. \ref{min_dist_algorithm}. Towards this end, let $S_{i}$ and $H(S_i)$ represent the number of nodes and the number of symbols included at the end of the $i$-th iteration. Furthermore, let $s_i=|S_i|-|S_{i-1}|$ and $h_i=|H(S_i)|-|H(S_{i-1})|$, represent the corresponding increments between the $i-1$-th and the $i$-the iteration.
We divide the analysis into two cases.
\begin{itemize}
\item Case 1: [The algorithm exits without ever entering line 8.] Note that we have $1 \leq s_i \leq n_{loc}$ and $h_i\leq \theta_{loc} -a(n_{loc}-s_i)$ where $a(n_{loc}-s_i)$ is the minimum number of symbols covered by $(n_{loc} - s_i)$ nodes in the local FR code and hence the minimum size of $|Pf_{j^*} \cap S_{i-1}|$. By considering the bipartite graph representing the local FR code it can be seen that $\displaystyle a(n_{loc}-s_i) \geq \frac{(n_{loc}-s_i)\alpha}{\rho_{loc}}$ .
    Thus, we have
$$\displaystyle \theta_{loc}-a(n_{loc}-s_i) \leq \theta_{loc}-\frac{n_{loc}\alpha-s_i\alpha}{\rho_{loc}}=\frac{s_i\alpha}{\rho_{loc}}.$$ Suppose that the algorithm runs for $l$ iterations and exits on the $l+1$ iteration. Then
$$\displaystyle \sum_{i=1}^ls_i \geq \frac{\rho_{loc}}{\alpha}\sum_{i=1}^l h_i.$$ Since the algorithm exits without ever entering line $8$, it is unable to accumulate even one additional node. Hence
\begin{align*} \sum_{i=1}^l h_i &\geq \calM-\alpha, \text{~which implies}\\
\sum_{i=1}^l s_i &\geq \left \lceil \frac{\rho_{loc}}{\alpha}(\calM-\alpha) \right \rceil  \text{~by the integer constraint}.
\end{align*}
Thus, the bound on the minimum distance becomes
$$\displaystyle d_{min}\leq n- \left \lceil\frac{\rho_{loc}M}{\alpha} \right \rceil  +\rho_{loc}.$$
\item Case 2: [The algorithm exits after entering line 8.] Note that by assumption, $\calM > \theta_{loc}$. Suppose that the algorithm enters line $5$, $l \geq 1$ times. Now we have $\displaystyle \sum_{i=1}^l h_i \geq \calM-\theta_{loc}$, otherwise we could include another local structure. Hence we need to add nodes so that strictly less than $\displaystyle \calM - \sum_{i=1}^l h_i $ symbols are covered. It can be seen that we can include at least $\displaystyle \left \lceil\frac{\calM-\sum_{i=1}^l h_i}{\alpha} \right \rceil -1$  more nodes. Therefore, the total number of nodes
\begin{align*}
&\geq \frac{\rho_{loc}}{\alpha}\sum_{i=1}^l h_i+ \left \lceil\frac{\calM-\sum_{i=1}^l h_i}{\alpha} \right \rceil  - 1\\
&\geq \frac{\rho_{loc}-1}{\alpha}(\calM-\theta_{loc})+\frac{\calM}{\alpha}-1\\
&=\frac{\calM\rho_{loc}+\theta_{loc}}{\alpha}-n_{loc}-1.
\end{align*}
Therefore, we have the following minimum distance bound
$$\displaystyle d_{min} \leq n+n_{loc}+1- \left \lceil \frac{\calM\rho_{loc}+\theta_{loc}}{\alpha} \right \rceil. $$
The final bound is obtained by taking the maximum of the two bounds obtained above.
\end{itemize}

The proof of Corollary \ref{mincor} follows by observing that when the code consists of $l$ disjoint local FR codes and the file size $\calM = t\theta_{loc} + \beta$, where $0 \leq \beta \leq \alpha$, the algorithm in Fig. \ref{min_dist_algorithm} never enters line 8.
\end{proof}

\end{document}